\documentclass[conference]{IEEEtran}
\IEEEoverridecommandlockouts
\usepackage{cite}
\usepackage{amsmath,amssymb,amsfonts}
\usepackage{algorithmic}
\usepackage{algorithm}
\usepackage{graphicx}
\usepackage[caption=false,font=footnotesize]{subfig}
\usepackage{textcomp}
\usepackage[dvipsnames,svgnames,table]{xcolor}
\usepackage{booktabs}
\usepackage{makecell}
\newcommand{\methodname}{SWEEP}
\newcommand{\method}{\textbf{\texttt{\methodname}}}
\newcommand{\gray}[1]{\textcolor{gray}{#1}}
\newcommand{\venue}[1]{{\scriptsize\textcolor{gray!70!black}{[#1]}}}
\colorlet{bestcell}{red!13}
\colorlet{secondcell}{blue!10}
\newcommand{\best}[1]{\cellcolor{bestcell}\textbf{#1}}
\newcommand{\second}[1]{\cellcolor{secondcell}\textbf{#1}}
\DeclareRobustCommand{\swatch}[1]{{\setlength{\fboxsep}{0pt}\setlength{\fboxrule}{0.4pt}\raisebox{-0.2ex}{\fcolorbox{gray!80}{#1}{\rule{0pt}{1.8ex}\hspace{5ex}}}}}
\newcommand{\band}[1]{\gray{\textit{#1}}}
\definecolor{hdrA}{HTML}{8A1F3D}
\definecolor{hdrB}{HTML}{16479B}
\definecolor{hdrC}{HTML}{15683A}
\definecolor{hdrD}{HTML}{4D4D4D}
\newcommand{\hd}[2]{\cellcolor{#1}\textcolor{white}{\textbf{#2}}}
\definecolor{oursname}{HTML}{FBE5D0}

\newcommand{\std}[1]{{\scriptsize\textcolor{ForestGreen}{($\pm$#1)}}}
\usepackage{tcolorbox}
\usepackage{ntheorem}
\theoremseparator{:}

\newtheorem{proposition}{Proposition}
\newtheorem{theorem}{Theorem}

\theoremstyle{nonumberplain}
\theorembodyfont{\normalfont}
\newtheorem{proof}{Proof}
\usepackage{hyperref}
\usepackage{balance}

\def\BibTeX{{\rm B\kern-.05em{\sc i\kern-.025em b}\kern-.08em
    T\kern-.1667em\lower.7ex\hbox{E}\kern-.125emX}}
\begin{document}

\title{Structuring and Merging Deletion Demands on Crowdsourced Channel Knowledge Maps}

\author{
\IEEEauthorblockN{Zihao Ding and Jun Huang}
\IEEEauthorblockA{Department of Electrical Engineering and Computer Science, South Dakota State University, Brookings, SD 57007, USA\\
Email: zihao.ding@jacks.sdstate.edu, jun.huang@sdstate.edu}
}

\maketitle

\begin{abstract}
A base station can build a channel knowledge map (CKM) from the uplink pilots of its terminals and sweep only the beams that the map ranks highest. A fault at a terminal or at the base station corrupts the measurements taken while it lasts, and the map must then forget them. Removing the whole terminal discards its good measurements, and the raw reports needed to remove samples are gone. We define the structured deletion unit (SDU), which names the terminals, rounds, grids, and bands to forget. We propose \underline{S}tructured \underline{W}iping with \underline{E}nvelope-bounded \underline{E}xecution \underline{P}artitioning (\method{}), which executes an SDU by clearing blocks of an aggregated ledger and re-sounding the affected grids. \method{} merges pending demands so that one campaign serves several, and it bounds what the smallest SDU containing a group clears beyond the demands. We prove that a merged execution leaves no named measurement in the map, that demands differing only in their intervals are best merged in contiguous runs, and that the greedy merge is within a factor $1+\ln Q$ of the best grouping over its candidates.
In simulation and on a USRP testbed, \method{} clears at most 5.1\% of the good measurements, against up to 86.8\% when whole terminals are removed or the map is rolled back to an earlier block.
\end{abstract}

\begin{IEEEkeywords}
Channel knowledge map, beam management, machine unlearning, software-defined radio.
\end{IEEEkeywords}

\section{Introduction}
\label{sec:intro}

Base stations keep learning from what their terminals report after deployment~\cite{Wu2026TNSE}. A channel knowledge map (CKM) stores, for each grid of the service area and each band, the average gain of every receive beam, and the base station (BS) builds it from the uplink pilots that terminals send as they move~\cite{Zeng2024COMST}. When a terminal needs a beam, the BS sweeps only the few beams that the map ranks highest for its grid, which cuts the pilot overhead of beam selection~\cite{Wu2021ICCW}. Raw reports would reveal where each terminal went~\cite{Hu2019INFOCOM}, so the BS keeps only per-grid statistics aggregated over terminals.

Faults at either end of the link corrupt such a map. At the transmitter, a terminal whose power amplifier is off calibration reports gains that are too high, and a terminal whose positioning drifts files its gains under the wrong grids~\cite{Ye2017LCN}. At the receiver, the phase calibration of the BS receive chains can drift on one band and distort the gains of every terminal on it. Such faults are found after the fact, and the map must then forget what was measured while they lasted. Existing unlearning removes a client or a sample~\cite{Romandini2025TNNLS,Liu2021IWQoS,Bourtoule2021SP}, but removing the terminal discards its good measurements, and the raw reports needed to remove samples are gone. What matches a fault is a structured deletion unit (SDU), which names the terminals, the interval of rounds, the grids, and the bands to forget.

Forgetting also costs air time. To execute an SDU, the BS clears the ledger blocks it covers and runs a re-sounding campaign, in which terminals in the affected grids sweep the full codebook. Each campaign has a fixed overhead, so executing accumulated demands one by one repeats it over the same grids. Merging them lets one campaign serve several demands, but SDUs are not closed under union. The smallest SDU that contains two demands also covers combinations that neither named, such as the rounds between their intervals, and clearing it discards good measurements, which we call over-deletion. CKM updating forgets outdated entries after an environment change as one event~\cite{Gong2025ICCCW,Han2025ICCCW}, batched unlearning executes the exact union of sample requests~\cite{Bourtoule2021SP,Hu2024CCS}, and rollback-based federated unlearning removes everything after a given round~\cite{Tao2024VLDB,Jiang2025TIFS}. None of them bounds how far a merged deletion may overreach.

The structure of the demands, however, makes such a bound tractable. A transmitter fault is narrow and long, since it names one terminal across many grids, while a receiver fault is wide and thin, since it names all terminals on one band. Where the shapes nest, merging is nearly free, and where they do not, the smallest SDU that contains both grows quickly. The BS can also price every candidate merge from the aggregated counts before running it. Merging therefore becomes a covering problem in which every group respects a budget on the good measurements it clears.

Motivated by these insights, we design Structured Wiping with Envelope-bounded Execution Partitioning (\method{}). \method{} executes an SDU as one campaign of clearing and re-sounding on a ledger aggregated over terminals. It executes each group of pending demands as its envelope, the smallest SDU that contains its members, under a bound on the utility that the envelope removes beyond the demands. It forms the groups by dynamic programming when demands differ only in their intervals and greedily otherwise, and it flushes the queue when its liability reaches a threshold.

Our contributions are summarized as follows:
\begin{itemize}
\item We formulate the deletion of faulty measurements from a crowdsourced CKM with SDUs and pose merging as a covering problem that minimizes the campaign cost in pilot slots under a per-group bound on over-deletion.
\item We propose \method{}, which executes SDUs on an aggregated ledger and merges pending demands into campaigns whose over-deletion stays within the bound.
\item We prove that merging leaves no named measurement, that contiguous groups are optimal when demands differ only in intervals, and that the greedy merge is within $1+\ln Q$ of the best cover over its candidate groups.
\item In simulation and on a USRP testbed, \method{} clears at most 2.6\% and 5.1\% of the good measurements, against up to 86.8\% for terminal removal or rollback.
\end{itemize}

\section{System Model and Problem Formulation}
\label{sec:system}

\begin{figure}[t]
  \centering
  \includegraphics[width=0.95\columnwidth]{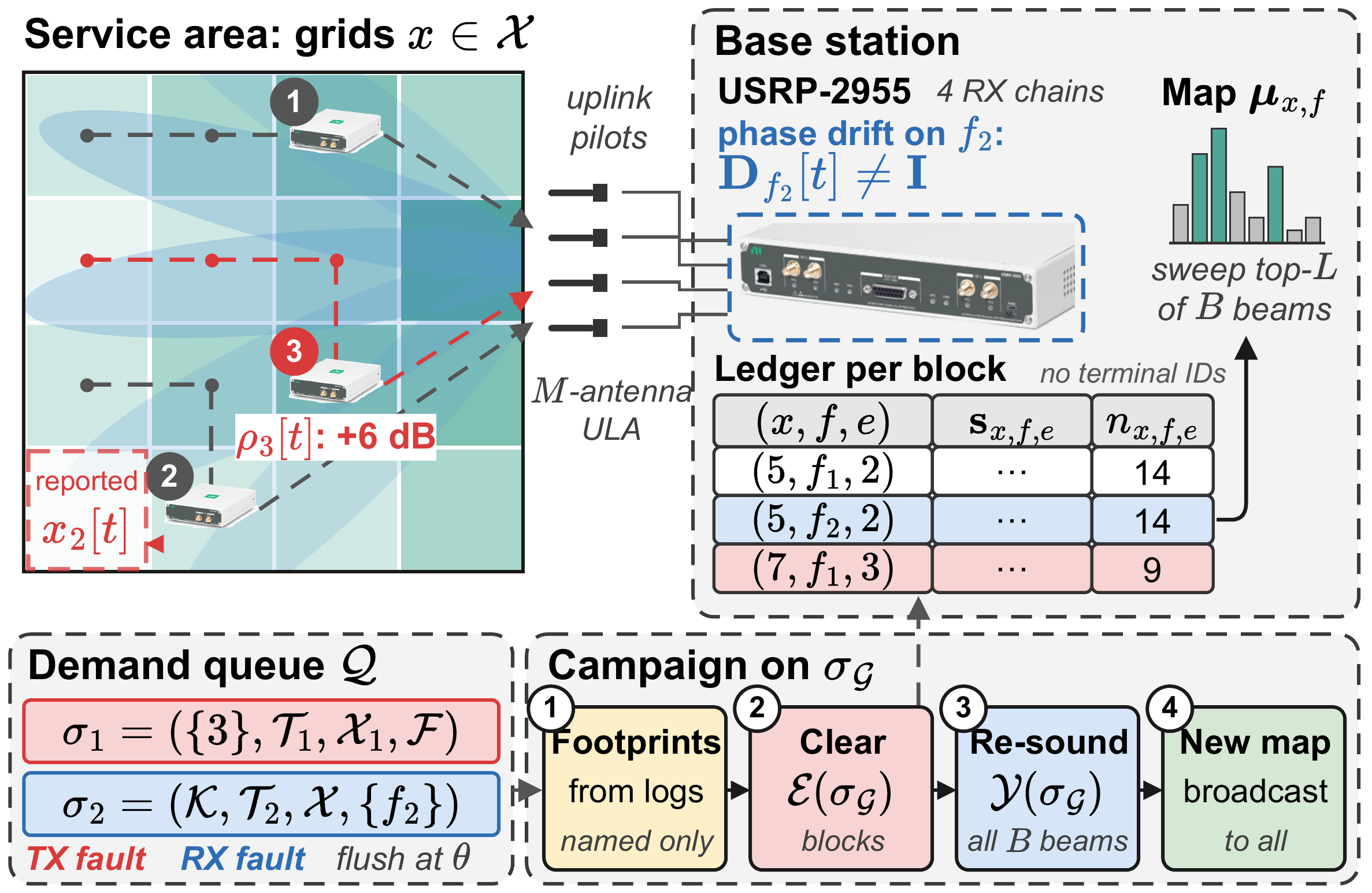}
  \caption{System model of structured deletion on a channel knowledge map.}
  \label{fig:system}
\end{figure}

\begin{figure*}[t]
  \centering
  \begin{minipage}[t]{0.48\textwidth}
    \centering
    \includegraphics[width=\linewidth]{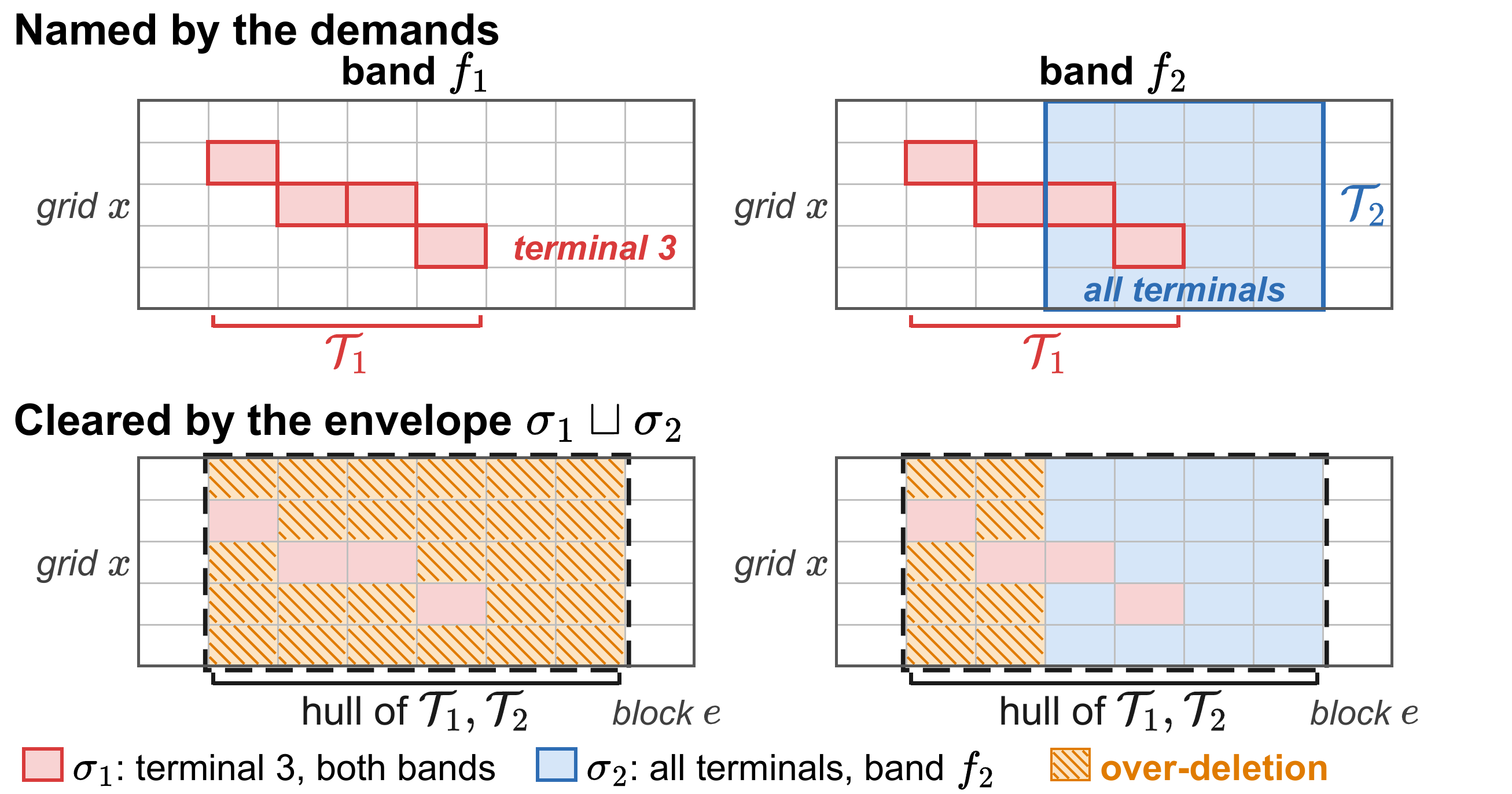}
    \caption{Two fault shapes and the blocks that their envelope clears.}
    \label{fig:shape}
  \end{minipage}
  \hfill
  \begin{minipage}[t]{0.48\textwidth}
    \centering
    \includegraphics[width=\linewidth]{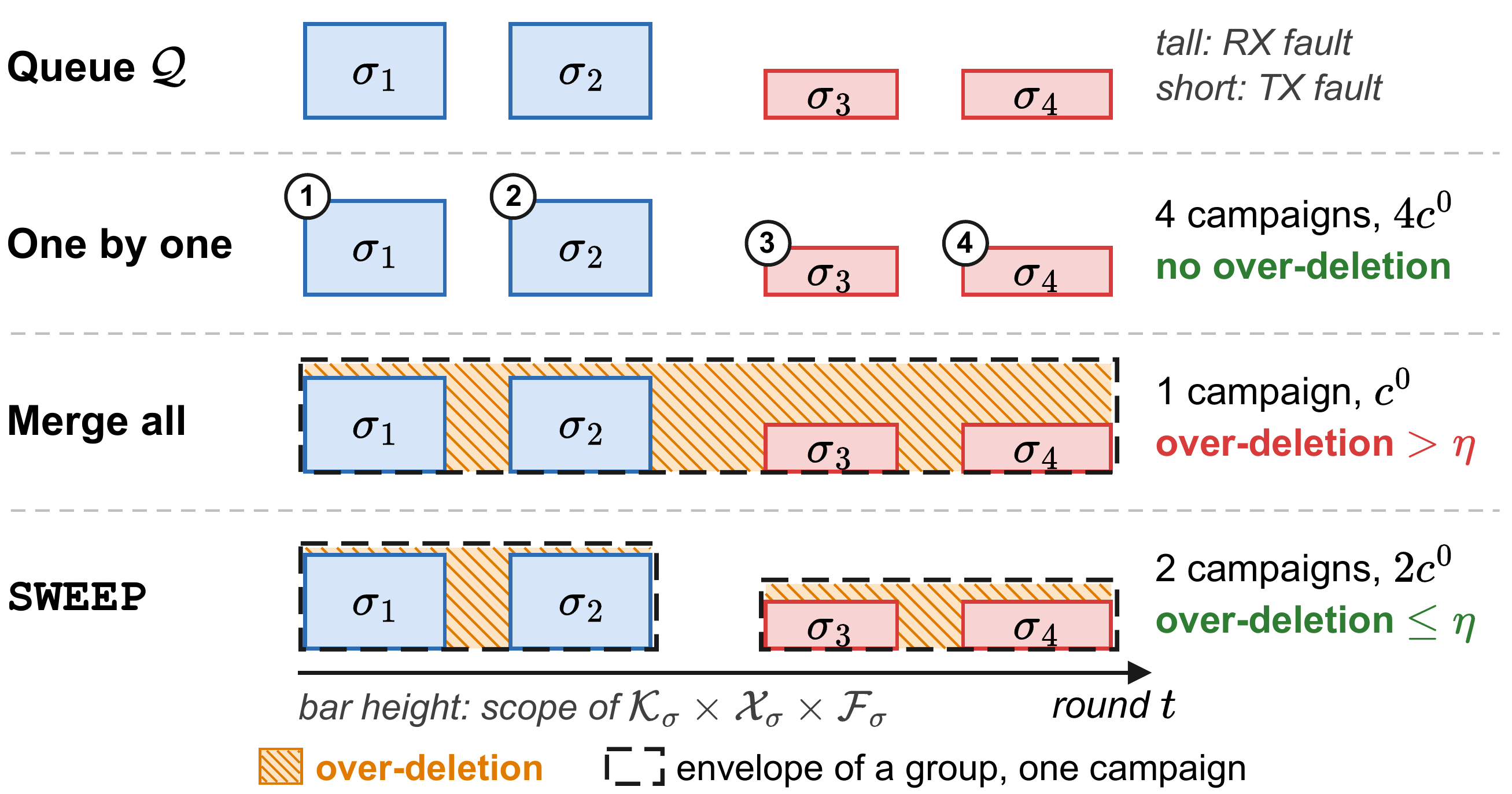}
    \caption{Four demands merged one by one, all at once, and by \method{}.}
    \label{fig:merge}
  \end{minipage}
\end{figure*}

\subsection{System Model}
\label{subsec:system-model}

We consider a base station (BS) that serves a set $\mathcal{K}$ of $K$ terminals on a set $\mathcal{F}$ of bands, as illustrated in Fig.~\ref{fig:system}. The BS has an $M$-antenna uniform linear array and a codebook of $B$ receive beams $\mathbf{v}_1,\ldots,\mathbf{v}_B\in\mathbb{C}^{M}$, and the service area is divided into a set $\mathcal{X}$ of grids. In every round $t$, terminal $k$ reports its grid $x_k[t]\in\mathcal{X}$ and sends an uplink pilot on every band. On band $f$, the BS measures the gain vector $\boldsymbol{\gamma}_{k,f}[t]\in\mathbb{R}^{B}$ with entries
\begin{equation}
    \gamma_{k,f,b}[t]\;=\;\rho_k[t]\,\bigl|\mathbf{v}_b^{\mathrm{H}}\,\mathbf{D}_f[t]\,\mathbf{h}_{k,f}[t]\bigr|^{2},\quad b=1,\ldots,B,
    \label{eq:gain}
\end{equation}
where $\mathbf{h}_{k,f}[t]\in\mathbb{C}^{M}$ is the uplink channel scaled by the transmit power that the terminal declares. The ratio $\rho_k[t]$ of the actual to the declared power is one for a calibrated terminal, and the diagonal matrix $\mathbf{D}_f[t]$ of residual phases of the receive chains on band $f$ is the identity for a calibrated BS.

The BS keeps no per-terminal records, which would be location traces. It groups the rounds into blocks of $W$ rounds, indexed by $e$, and keeps a ledger with one entry per grid $x$, band $f$, and block $e$. The entry holds the sum $\mathbf{s}_{x,f,e}$ of the gain vectors reported from grid $x$ on band $f$ during block $e$ and their number $n_{x,f,e}$. Each terminal keeps its own reports and diagnostics in a local log. The map is
\begin{equation}
    \boldsymbol{\mu}_{x,f}\;=\;\frac{\sum_{e}\mathbf{s}_{x,f,e}}{\sum_{e}n_{x,f,e}},
    \label{eq:map}
\end{equation}
and grid $x$ is valid on band $f$ once $\sum_{e}n_{x,f,e}\ge n^{\textrm{min}}$.

When a terminal in grid $x$ needs a beam on band $f$, the BS sweeps only the set $\mathcal{L}_{x,f}$ of the $L\ll B$ beams with the largest entries of $\boldsymbol{\mu}_{x,f}$, or all $B$ beams if the grid is not valid. The map thus saves $B-L$ pilot slots per selection at the price of a signal-to-noise ratio (SNR) loss
\begin{equation}
    \phi_{x,f}\;=\;\mathbb{E}\biggl[10\log_{10}\frac{\max_{b}\gamma_{b}}{\max_{b\in\mathcal{L}_{x,f}}\gamma_{b}}\biggr]
    \label{eq:loss}
\end{equation}
in dB, where $\gamma_b$ is the gain of beam $b$ for that terminal. Let $\lambda_{x,f}$ denote the traffic of grid $x$ on band $f$, the number of beam selections it requests per round.

\subsection{Problem Formulation}
\label{subsec:problem-formulation}

A deletion demand arrives when a diagnostic log reveals a fault and the rounds in which it lasted. We take the demands as given and do not attribute faults to their sources. A demand is a structured deletion unit (SDU)
\begin{equation}
    \sigma\;=\;\bigl(\mathcal{K}_\sigma,\;\mathcal{T}_\sigma,\;\mathcal{X}_\sigma,\;\mathcal{F}_\sigma\bigr),
    \label{eq:sdu}
\end{equation}
which names terminals $\mathcal{K}_\sigma\subseteq\mathcal{K}$, an interval of rounds $\mathcal{T}_\sigma=[t_\sigma^{\textrm{s}},t_\sigma^{\textrm{e}}]$, grids $\mathcal{X}_\sigma\subseteq\mathcal{X}$, and bands $\mathcal{F}_\sigma\subseteq\mathcal{F}$. It asks the BS to forget every measurement that a terminal in $\mathcal{K}_\sigma$ reported from a grid in $\mathcal{X}_\sigma$ on a band in $\mathcal{F}_\sigma$ during $\mathcal{T}_\sigma$. The faults in Eq.~\eqref{eq:gain} yield three shapes. A power fault of terminal $k$ gives $(\{k\},\mathcal{T}_\sigma,\mathcal{X},\mathcal{F})$, and a positioning fault gives $(\{k\},\mathcal{T}_\sigma,\mathcal{X}_\sigma,\mathcal{F})$, where $\mathcal{X}_\sigma$ holds the grids that $k$ reported under a poor position fix. A receive-chain fault on band $f$ gives $(\mathcal{K},\mathcal{T}_\sigma,\mathcal{X},\{f\})$. Fig.~\ref{fig:shape} draws a transmitter fault and a receiver fault on the ledger.

The ledger cannot separate terminals within a block, so the BS executes an SDU by clearing every block that holds a named measurement. For $k\in\mathcal{K}_\sigma$, $t\in\mathcal{T}_\sigma$, and $f\in\mathcal{F}_\sigma$, these blocks are
\begin{equation}
    \mathcal{E}(\sigma)\;=\;\bigl\{(x_k[t],\,f,\,\lceil t/W\rceil):\,x_k[t]\in\mathcal{X}_\sigma\bigr\}.
    \label{eq:clear}
\end{equation}
The BS then runs a campaign in which the terminals in every grid--band pair of $\mathcal{Y}(\sigma)=\{(x,f):(x,f,e)\in\mathcal{E}(\sigma)\}$ sweep all $B$ beams until $n^{\textrm{min}}$ new measurements are collected there, and it broadcasts the new map. The campaign costs
\begin{equation}
    C(\sigma)\;=\;c^{0}+n^{\textrm{min}}B\,|\mathcal{Y}(\sigma)|
    \label{eq:cost}
\end{equation}
pilot slots, where $c^{0}$ counts the announcement, the synchronization, and the map broadcast in the same unit. Clearing also discards the good measurements in the cleared blocks. The utility that a set $\mathcal{E}$ of blocks removes is the traffic-weighted number of measurements it holds, $\Delta(\mathcal{E})=\sum_{(x,f,e)\in\mathcal{E}}\lambda_{x,f}\,n_{x,f,e}$, and we write $\Delta(\sigma)$ for $\Delta(\mathcal{E}(\sigma))$.

Demands accumulate in a queue $\mathcal{Q}$ of $Q$ SDUs $\sigma_1,\ldots,\sigma_Q$, whose fields we abbreviate as $\mathcal{K}_i$, $\mathcal{T}_i$, $\mathcal{X}_i$, and $\mathcal{F}_i$. Executing them one by one pays $c^{0}$ per demand and re-sounds the same pairs repeatedly. A group $\mathcal{G}\subseteq\mathcal{Q}$ can share one campaign, but SDUs are closed under intersection and not under union. The smallest SDU that contains every member of $\mathcal{G}$ is its envelope
\begin{equation}
    \sigma_{\mathcal{G}}=\Bigl(\bigcup_{i\in\mathcal{G}}\mathcal{K}_i,\;\bigl[\min_{i\in\mathcal{G}}t_i^{\textrm{s}},\,\max_{i\in\mathcal{G}}t_i^{\textrm{e}}\bigr],\;\bigcup_{i\in\mathcal{G}}\mathcal{X}_i,\;\bigcup_{i\in\mathcal{G}}\mathcal{F}_i\Bigr).
    \label{eq:envelope}
\end{equation}
The envelope clears every block that its members name, but also blocks that none of them named, such as those between two intervals or of one terminal in the grids of another. The utility of these extra blocks is the over-deletion of the group. The BS selects a collection $\mathcal{P}$ of groups that together contain every demand by solving
\begin{equation}
\begin{split}
    (\mathrm{P}):\;\;\min_{\mathcal{P}}\;\;
    & \sum_{\mathcal{G}\in\mathcal{P}} C(\sigma_{\mathcal{G}}) \\
    \mathrm{s.t.}\;\;
    & (\mathrm{C}_1):\;\Delta(\sigma_{\mathcal{G}})\;\le\;\Delta\bigl({\textstyle\bigcup_{i\in\mathcal{G}}}\mathcal{E}(\sigma_i)\bigr)+\eta,
\end{split}
\label{eq:problem}
\end{equation}
where $\eta$ bounds the over-deletion of every group $\mathcal{G}\in\mathcal{P}$. A cleared block stays cleared, so groups may share a demand, and $\mathcal{P}$ is a cover of $\mathcal{Q}$ rather than a partition.

\section{\texorpdfstring{\method{}}{SWEEP}: Structured Wiping with Envelope-bounded Execution Partitioning}
\label{sec:method}

Problem~$(\mathrm{P})$ in Eq.~\eqref{eq:problem} cannot be solved directly. The BS cannot price a group from the ledger alone, since Eq.~\eqref{eq:clear} depends on where the named terminals reported. The problem ranges over the $2^{Q}$ subsets of the queue, and meanwhile the map keeps serving traffic with the measurements that the queue names. To address these issues, we design \underline{\textbf{S}}tructured \underline{\textbf{W}}iping with \underline{\textbf{E}}nvelope-bounded \underline{\textbf{E}}xecution \underline{\textbf{P}}artitioning (\method{}), whose three modules follow in Sections~\ref{subsec:ledger}, \ref{subsec:merge}, and~\ref{subsec:flush}. Fig.~\ref{fig:arch} shows when each runs.

\subsection{Block Ledger and Footprints}
\label{subsec:ledger}

This module keeps the map and executes the campaign of one SDU. In every round, the BS adds each report to the ledger entry of its grid, band, and block, updates Eq.~\eqref{eq:map} for the touched grids, and drops the report. The ledger holds $|\mathcal{X}||\mathcal{F}|(B+1)$ numbers per block regardless of $K$. The block length $W$ sets the time resolution of deletion. An interval that does not align with the block boundaries clears its partial blocks whole, so a small $W$ clears less but stores more.

To price and execute SDUs, the BS needs to know where the named terminals reported, which the ledger does not record. When the queue is flushed, every terminal named in a demand sends its footprint, the set of blocks in which it reported during the rounds that the queue spans, read from its local log. A demand that names all terminals needs no footprint, since Eq.~\eqref{eq:clear} then selects every nonempty block of its grids, bands, and interval. From the footprints and the counts $n_{x,f,e}$, the BS evaluates $\mathcal{E}(\sigma)$, $C(\sigma)$, and $\Delta(\sigma)$ for any SDU built from the queue without further signaling. A footprint lists at most one grid per round, so it takes $\lceil\log_2|\mathcal{X}|\rceil$ bits per round that the queue spans, a few hundred bytes for hundreds of rounds. The BS deletes the footprints after the flush, and a terminal that is never named sends nothing, so the BS holds no trace of it.

To execute an SDU $\sigma$, the BS clears the entries of $\mathcal{E}(\sigma)$, including the good measurements that other terminals added to them, which is the price of keeping no per-terminal records and is counted by $\Delta$. It marks every pair of $\mathcal{Y}(\sigma)$ whose count falls below $n^{\textrm{min}}$ as not valid, so that its selections fall back to the full sweep. It then schedules the re-sounding campaign of Section~\ref{subsec:problem-formulation}. The new measurements enter blocks after the current round and hence outside every named interval. Eq.~\eqref{eq:map} is a ratio of sums over blocks, so removing an entry removes its sum and its count together, and the map is exactly the one that a ledger without these blocks would give.

\subsection{Envelope-Bounded Merge}
\label{subsec:merge}

This module solves $(\mathrm{P})$ on the queue at a flush. It first builds the candidate groups. For every pair $i,j$ of demands, the candidate $\mathcal{G}_{ij}$ holds every demand whose SDU lies inside the envelope of $\sigma_i$ and $\sigma_j$, and the candidates that violate $(\mathrm{C}_1)$ are dropped. A demand inside another thus joins its group at no cost, since the envelope in Eq.~\eqref{eq:envelope} does not change. Let $\mathcal{A}$ denote the remaining candidates. The candidate $\mathcal{G}_{ii}$ has the envelope $\sigma_i$ and satisfies $(\mathrm{C}_1)$, so $\mathcal{A}$ covers the queue. The interval of any envelope is fixed by two members, and $\mathcal{A}$ holds every group whose other fields are also spanned by two of its members. A group whose envelope needs three demands to span, such as three transmitter faults in three separate areas, crosses several trajectories and rarely meets $(\mathrm{C}_1)$ under a small $\eta$.

The two shapes fare differently under this rule. A transmitter fault on the band of a receiver fault and within its interval lies inside it and joins at no cost. A transmitter fault on both bands spreads the envelope of a receiver fault over the other band of every grid, so they merge only if that band holds few measurements then. Transmitter faults of two terminals merge when their trajectories pass the same grids in nearby rounds.

When all demands share their terminals, grids, and bands, as for repeated drifts of one receive chain, \method{} sorts the demands by their start rounds and solves $(\mathrm{P})$ exactly by the dynamic program
\begin{equation}
    F(j)\;=\;\min_{1\le i\le j,\;\mathcal{G}_{ij}\in\mathcal{A}}\Bigl[F(i-1)+C\bigl(\sigma_{\mathcal{G}_{ij}}\bigr)\Bigr],\quad F(0)=0,
    \label{eq:dp}
\end{equation}
whose optimality \textbf{Proposition}~\ref{prop:interval} in Section~\ref{sec:analysis} establishes. In general, \method{} covers the queue greedily~\cite{Chvatal1979MOR}. Let $\mathcal{U}$ denote the demands not yet covered, initially $\mathcal{Q}$. The greedy repeatedly selects the candidate $\mathcal{G}\in\mathcal{A}$ with the least cost per uncovered demand, $C(\sigma_{\mathcal{G}})/|\mathcal{G}\cap\mathcal{U}|$, adds it to $\mathcal{P}$, and removes its members from $\mathcal{U}$ until $\mathcal{U}$ is empty. \textbf{Theorem}~\ref{thm:greedy} bounds its cost. Fig.~\ref{fig:merge} shows the effect on four demands. Executing them one by one pays four fixed overheads, and merging all of them clears the long gaps between their intervals. \method{} merges only the demands whose envelopes stay within $\eta$.

Both procedures use only the ledger counts and the footprints and evaluate at most $Q^{2}$ candidates or runs.

\subsection{Liability-Triggered Flush}
\label{subsec:flush}

This module decides when to flush. While a demand waits, the map keeps serving traffic with the measurements it names. \method{} therefore charges each pending demand $\sigma_i$ the liability rate $\psi_i=\sum_{(x,f)\in\mathcal{Y}(\sigma_i)}\lambda_{x,f}$ per round, the traffic that its grid--band pairs serve, and it tracks the total rate $\Psi=\sum_{i\in\mathcal{Q}}\psi_i$ of the queue. As soon as $\Psi$ reaches a threshold $\theta$, it collects the footprints, runs the merge, executes the campaigns of the selected groups, and empties the queue. A small $\theta$ flushes often and merges little, while a large $\theta$ merges more but lets the named measurements serve longer. A receiver fault names every pair of its band and thus triggers a flush almost at once. Demands that arrive during a flush wait for the next one. \textbf{Proposition}~\ref{prop:flush} shows that a threshold of this form is optimal when the cost of a flush does not depend on the queue.

\method{} transmits only the footprints and the campaigns of the selected groups, and its merge is polynomial in $Q$.

\begin{figure*}[t]
  \centering
  \begin{minipage}[t]{0.48\textwidth}
    \centering
    \includegraphics[width=\linewidth]{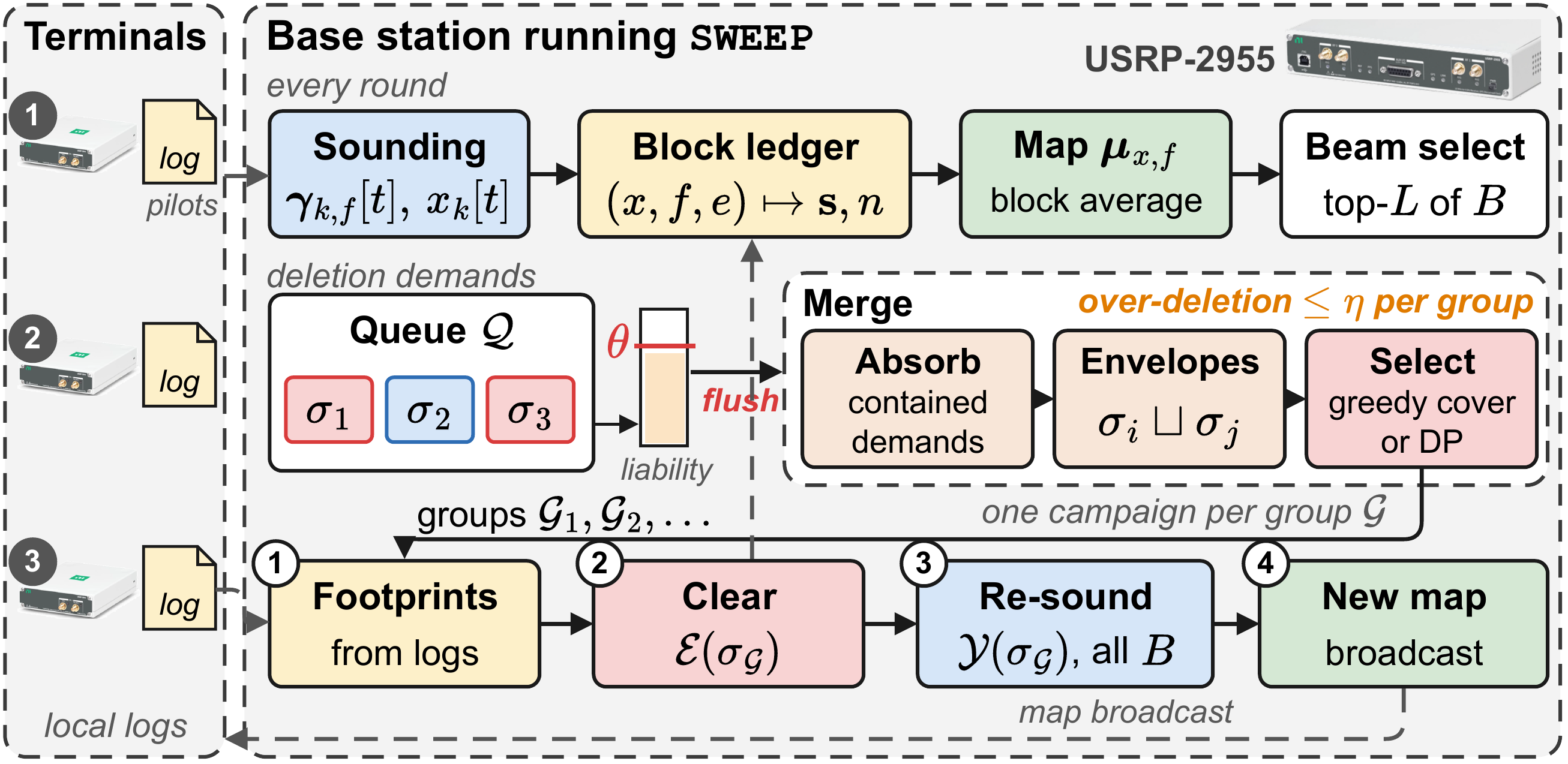}
    \caption{The architecture of \method{} at the base station.}
    \label{fig:arch}
  \end{minipage}
  \hfill
  \begin{minipage}[t]{0.48\textwidth}
    \centering
    \includegraphics[width=\linewidth]{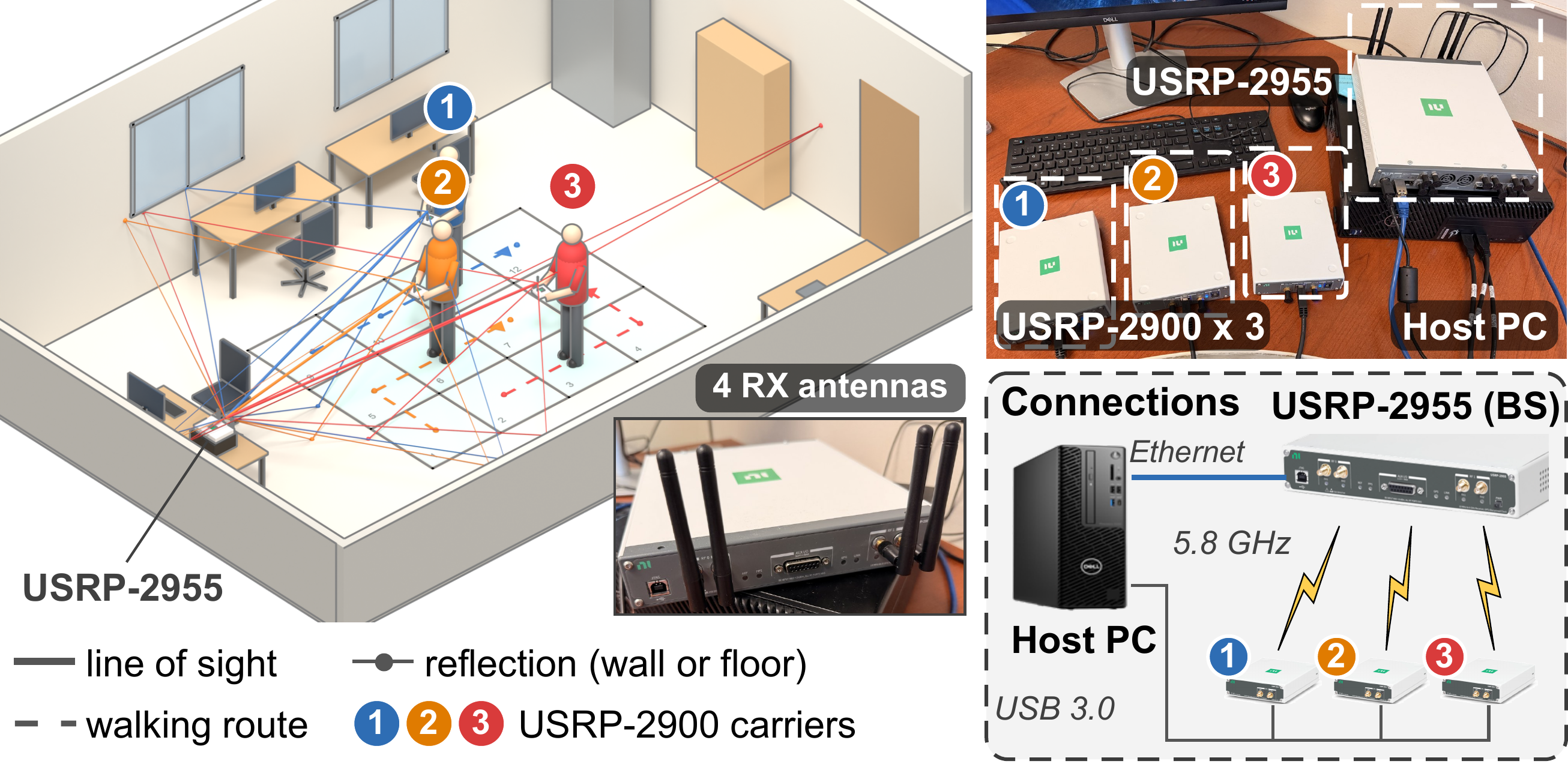}
    \caption{USRP testbed: lab scene with traced rays, devices, and links.}
    \label{fig:testbed}
  \end{minipage}
\end{figure*}

\begin{table*}[!t]
\centering
\setlength{\tabcolsep}{4pt}
\setlength{\aboverulesep}{0pt}
\setlength{\belowrulesep}{0pt}
\setlength{\extrarowheight}{0pt}
\renewcommand{\arraystretch}{0.88}
\caption{Map quality, deletion, and execution cost per method and fault type. The SNR loss and $\mathrm{GainErr}$ are in dB, $\mathrm{Hit}$ and $\mathrm{Invalid}$ in \% of the selections and of the traffic, $\mathrm{Left}$ and $\mathrm{Over}$ in \% of the named and of the good measurements, $\mathrm{Wait}$ in rounds, the pilot slots in thousands, and the liability in traffic-rounds. The testbed columns come from the USRP devices, with the same units and the air time of re-sounding in seconds. A \swatch{bestcell} cell marks the best value and a \swatch{secondcell} cell the second best within each fault type, excluding Retrain.}
\label{tab:main}
\resizebox{\linewidth}{!}{%
\begin{tabular}{l|cccc|ccc|ccc|cccc}
\Xhline{1.2pt}
 & \multicolumn{4}{c|}{\hd{hdrA}{Map}} & \multicolumn{3}{c|}{\hd{hdrB}{Deletion}} & \multicolumn{3}{c|}{\hd{hdrC}{Execution}} & \multicolumn{4}{c}{\hd{hdrD}{Testbed}}\\
\textbf{Method} & SNR loss$\downarrow$ & Hit$\uparrow$ & GainErr$\downarrow$ & Invalid$\downarrow$ & Left$\downarrow$ & Over$\downarrow$ & Wait$\downarrow$ & Campaigns$\downarrow$ & Pilots$\downarrow$ & Liability$\downarrow$ & SNR loss$\downarrow$ & Hit$\uparrow$ & Over$\downarrow$ & Air time$\downarrow$\\
\Xhline{1.2pt}
\multicolumn{15}{l}{\band{(a) Transmitter faults}}\\
\Xhline{0.6pt}
SISA \venue{S\&P'21} & 0.816\std{0.142} & 90.80\std{0.86} & 3.574\std{0.245} & \best{0.43}\std{0.29} & 36.5\std{7.8} & 1.8\std{0.4} & 49.1\std{10.8} & 17.0\std{1.7} & 60.7\std{3.5} & 683\std{230} & 0.918\std{0.087} & 91.26\std{0.96} & 5.7\std{1.2} & 5.616\std{0.438}\\
ERASER \venue{CCS'24} & 0.809\std{0.141} & 90.86\std{0.86} & 3.552\std{0.239} & \best{0.43}\std{0.29} & 30.9\std{7.2} & \second{1.7}\std{1.2} & 59.8\std{15.8} & \second{10.3}\std{1.8} & \second{49.9}\std{4.1} & 434\std{609} & 0.876\std{0.063} & 91.88\std{0.82} & \second{3.9}\std{1.4} & 2.680\std{0.296}\\
FedRecover \venue{S\&P'23} & 0.805\std{0.144} & \second{90.90}\std{0.88} & \second{3.550}\std{0.237} & 0.64\std{0.29} & \second{27.7}\std{5.6} & 17.2\std{2.5} & \second{46.7}\std{7.3} & 12.0 & 134.9\std{7.3} & \best{0} & 0.837\std{0.074} & 92.37\std{0.91} & 22.8\std{4.7} & 3.912\std{0.351}\\
Crab \venue{TIFS'25} & \second{0.778}\std{0.170} & 90.85\std{1.12} & \best{3.546}\std{0.232} & 6.29\std{5.65} & \best{19.0}\std{4.7} & 79.2\std{6.7} & \best{40.3}\std{3.1} & 11.7\std{0.5} & 438.6\std{83.7} & \best{0} & \second{0.796}\std{0.092} & \second{92.94}\std{1.16} & 67.1\std{6.3} & 4.032\std{0.274}\\
ST-REM \venue{ToN'20} & \best{0.755}\std{0.140} & \best{91.37}\std{0.98} & 3.826\std{0.355} & 10.93\std{8.03} & 65.8\std{15.1} & 17.3\std{9.7} & 193.9\std{25.0} & \best{0.0} & 113.2\std{83.1} & 6683\std{2829} & \best{0.774}\std{0.081} & \best{93.17}\std{0.88} & 20.6\std{5.2} & \best{0.000}\\
\Xhline{0.8pt}
\cellcolor{oursname}\textit{\textbf{\methodname{}}} & 0.808\std{0.141} & 90.87\std{0.85} & 3.551\std{0.237} & 0.44\std{0.30} & 28.4\std{5.9} & \best{1.1}\std{0.2} & 59.3\std{16.0} & 11.7\std{0.5} & \best{49.8}\std{4.0} & 5\std{7} & 0.829\std{0.058} & 92.45\std{0.73} & \best{3.2}\std{0.6} & \second{2.648}\std{0.183}\\
\Xhline{0.8pt}
\textit{Retrain} & 0.789\std{0.145} & 91.05\std{0.89} & 3.488\std{0.236} & 0.48\std{0.34} & 0.0 & 0.0 & - & - & - & - & 0.716\std{0.069} & 93.64\std{0.79} & 0.0 & -\\
\Xhline{0.8pt}
\multicolumn{15}{l}{\band{(b) Receiver faults}}\\
\Xhline{0.6pt}
SISA \venue{S\&P'21} & 0.888\std{0.168} & 89.80\std{1.54} & 3.801\std{0.279} & \best{0.67}\std{0.49} & 42.4\std{8.0} & 13.0\std{4.3} & 72.6\std{11.4} & 8.0 & \best{67.1}\std{5.2} & 6927\std{1524} & 1.067\std{0.126} & 89.79\std{1.37} & 14.8\std{3.6} & 1.888\std{0.147}\\
ERASER \venue{CCS'24} & 0.862\std{0.169} & \second{90.21}\std{1.31} & 3.729\std{0.296} & \second{0.75}\std{0.70} & 33.8\std{9.0} & \second{1.8}\std{0.4} & 67.7\std{16.2} & \second{3.9}\std{0.3} & \second{90.5}\std{11.2} & 4334\std{2537} & 0.972\std{0.094} & 90.96\std{1.09} & \second{5.6}\std{1.7} & \second{1.232}\std{0.091}\\
FedRecover \venue{S\&P'23} & 1.227\std{0.328} & 87.60\std{2.17} & 3.936\std{0.257} & 21.87\std{1.37} & \best{21.9}\std{5.0} & 86.8\std{2.6} & 50.3\std{5.5} & 4.0 & 411.1\std{16.0} & \best{0} & 1.384\std{0.218} & 86.73\std{2.04} & 79.4\std{5.8} & 1.376\std{0.063}\\
Crab \venue{TIFS'25} & \second{0.857}\std{0.196} & \second{90.21}\std{1.49} & \second{3.657}\std{0.267} & 2.31\std{2.45} & \best{21.9}\std{5.0} & 49.3\std{4.7} & \best{47.7}\std{5.1} & 4.0 & 196.2\std{27.6} & \best{0} & \second{0.948}\std{0.103} & \second{91.35}\std{1.25} & 46.7\std{7.1} & 1.352\std{0.082}\\
ST-REM \venue{ToN'20} & 1.571\std{0.510} & 85.64\std{3.36} & 5.061\std{0.618} & 43.94\std{8.79} & 26.0\std{9.4} & 72.1\std{6.2} & 207.2\std{25.0} & \best{0.0} & 455.2\std{91.1} & 15735\std{3225} & 1.623\std{0.267} & 84.69\std{2.33} & 68.3\std{8.4} & \best{0.000}\\
\Xhline{0.8pt}
\cellcolor{oursname}\textit{\textbf{\methodname{}}} & \best{0.849}\std{0.170} & \best{90.35}\std{1.31} & \best{3.649}\std{0.268} & 0.84\std{0.74} & 22.9\std{5.0} & \best{1.7}\std{0.4} & \second{47.9}\std{5.2} & 4.0 & 93.0\std{8.8} & \best{0} & \best{0.903}\std{0.111} & \best{91.86}\std{1.18} & \best{3.8}\std{0.8} & 1.280\std{0.105}\\
\Xhline{0.8pt}
\textit{Retrain} & 0.771\std{0.153} & 91.14\std{1.03} & 3.465\std{0.251} & 1.36\std{1.37} & 0.0 & 0.0 & - & - & - & - & 0.738\std{0.083} & 93.42\std{0.94} & 0.0 & -\\
\Xhline{0.8pt}
\multicolumn{15}{l}{\band{(c) Mixed faults}}\\
\Xhline{0.6pt}
SISA \venue{S\&P'21} & \second{0.935}\std{0.158} & \second{89.49}\std{1.28} & 3.843\std{0.223} & 1.20\std{0.94} & 29.6\std{7.7} & \second{23.9}\std{5.7} & 45.5\std{7.2} & 27.0\std{1.6} & 186.0\std{13.6} & 2298\std{1071} & 1.143\std{0.149} & 89.12\std{1.64} & \second{25.7}\std{5.4} & 7.504\std{0.683}\\
ERASER \venue{CCS'24} & 0.942\std{0.172} & 89.43\std{1.37} & 3.889\std{0.295} & \second{1.14}\std{0.95} & 33.7\std{9.3} & 30.4\std{18.5} & 68.1\std{14.2} & \second{11.7}\std{2.6} & \second{162.4}\std{13.3} & 4889\std{3391} & 1.172\std{0.176} & 88.84\std{1.87} & 32.9\std{9.6} & 3.912\std{0.457}\\
FedRecover \venue{S\&P'23} & 1.295\std{0.307} & 87.04\std{1.96} & 4.132\std{0.217} & 22.43\std{1.47} & \second{23.5}\std{7.6} & 81.6\std{7.1} & \second{45.1}\std{10.2} & 16.0 & 468.5\std{14.6} & \best{0} & 1.497\std{0.241} & 85.96\std{2.16} & 82.6\std{4.9} & 5.264\std{0.392}\\
Crab \venue{TIFS'25} & 0.960\std{0.180} & 89.27\std{1.32} & \second{3.828}\std{0.271} & 9.80\std{7.31} & \best{15.9}\std{5.8} & 78.7\std{3.6} & \best{38.8}\std{3.2} & 15.9\std{0.3} & 565.0\std{80.5} & \best{0} & \second{1.106}\std{0.138} & \second{89.63}\std{1.52} & 76.3\std{6.8} & 5.384\std{0.516}\\
ST-REM \venue{ToN'20} & 1.532\std{0.504} & 85.57\std{3.05} & 5.087\std{0.693} & 34.27\std{9.38} & 35.6\std{8.4} & 59.8\std{8.0} & 208.2\std{20.6} & \best{0.0} & 355.0\std{97.1} & 23981\std{3859} & 1.708\std{0.293} & 83.75\std{2.71} & 58.7\std{10.3} & \best{0.000}\\
\Xhline{0.8pt}
\cellcolor{oursname}\textit{\textbf{\methodname{}}} & \best{0.896}\std{0.162} & \best{89.91}\std{1.28} & \best{3.756}\std{0.243} & \best{0.99}\std{0.91} & 25.0\std{7.3} & \best{2.6}\std{0.3} & 59.8\std{19.1} & 15.7\std{0.5} & \best{138.4}\std{11.6} & 3\std{4} & \best{0.957}\std{0.117} & \best{91.31}\std{1.29} & \best{5.1}\std{1.1} & \second{3.336}\std{0.264}\\
\Xhline{0.8pt}
\textit{Retrain} & 0.771\std{0.154} & 91.11\std{1.05} & 3.462\std{0.254} & 1.60\std{1.65} & 0.0 & 0.0 & - & - & - & - & 0.751\std{0.096} & 93.07\std{1.03} & 0.0 & -\\
\Xhline{1.2pt}
\end{tabular}}
\end{table*}

\section{Theoretical Analysis}
\label{sec:analysis}

This section states the guarantees of Section~\ref{sec:method}.

\begin{proposition}\label{prop:complete}
For every group $\mathcal{G}$ and every $i\in\mathcal{G}$, $\mathcal{E}(\sigma_i)\subseteq\mathcal{E}(\sigma_{\mathcal{G}})$. After the campaign of $\sigma_{\mathcal{G}}$ clears its blocks, no measurement named by $\sigma_i$ remains in the ledger, and $\boldsymbol{\mu}_{x,f}$ equals the map of a ledger that never received the cleared blocks.
\end{proposition}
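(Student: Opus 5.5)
The plan is to prove the three claims in order: containment, completeness, and exactness of the map.

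\emph{Containment.} By Eq.~\eqref{eq:envelope}, every field of $\sigma_i$ lies inside the matching field of the envelope when $i\in\mathcal{G}$:
\begin{equation*}
\mathcal{K}_i\subseteq\bigcup_{j\in\mathcal{G}}\mathcal{K}_j,\qquad
\mathcal{X}_i\subseteq\bigcup_{j\in\mathcal{G}}\mathcal{X}_j,\qquad
\mathcal{F}_i\subseteq\bigcup_{j\in\mathcal{G}}\mathcal{F}_j,
\end{equation*}
and $\mathcal{T}_i=[t_i^{\textrm{s}},t_i^{\textrm{e}}]\subseteq[\min_{j}t_j^{\textrm{s}},\max_{j}t_j^{\textrm{e}}]$. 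Eq.~\eqref{eq:clear} defines $\mathcal{E}(\sigma)$ as the image of the map $(k,t,f)\mapsto(x_k[t],f,\lceil t/W\rceil)$ over triples with $k\in\mathcal{K}_\sigma$, $t\in\mathcal{T}_\sigma$, $f\in\mathcal{F}_\sigma$, restricted to $x_k[t]\in\mathcal{X}_\sigma$. This map depends only on the trajectories $x_k[\cdot]$ and not on $\sigma$. Every triple admissible for $\sigma_i$ is therefore admissible for $\sigma_{\mathcal{G}}$, and it is sent to the same block. This gives $\mathcal{E}(\sigma_i)\subseteq\mathcal{E}(\sigma_{\mathcal{G}})$. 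In other words, $\mathcal{E}$ is monotone under the componentwise order on SDUs.

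\emph{Completeness.} Consider a measurement named by $\sigma_i$: one reported by $k\in\mathcal{K}_i$ from grid $x_k[t]\in\mathcal{X}_i$ on band $f\in\mathcal{F}_i$ in round $t\in\mathcal{T}_i$. The ledger update of Section~\ref{subsec:ledger} adds it only to the entry $(x_k[t],f,\lceil t/W\rceil)$. That entry lies in $\mathcal{E}(\sigma_i)\subseteq\mathcal{E}(\sigma_{\mathcal{G}})$, so it is cleared, and both its sum and its count are removed. The raw report was dropped on arrival, so no other copy exists. The re-sounding campaign adds new measurements only to blocks after the current round, and these lie outside every named interval. So no named measurement re-enters the ledger.

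\emph{Exactness.} Write $\mathcal{E}$ for the cleared set. For each pair $(x,f)$, the post-clearing map is
\begin{equation*}
\frac{\sum_{e:(x,f,e)\notin\mathcal{E}}\mathbf{s}_{x,f,e}}{\sum_{e:(x,f,e)\notin\mathcal{E}}n_{x,f,e}}.
\end{equation*}
This is Eq.~\eqref{eq:map} evaluated on a ledger that never received the cleared blocks. The reason is that every entry is an additive sum of reports filed into exactly one block, so deleting a block subtracts its contribution from numerator and denominator alike. The validity rule uses the same counts, so it also agrees with that ledger.

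The main point needing care is the precise meaning of ``named measurement''. The statement must hold even though $\mathcal{E}$ was computed from footprints rather than from the ledger itself. I would state explicitly that footprints report the true $x_k[t]$ recorded in the local logs. I would also note that the positioning-fault case is covered, since $\mathcal{X}_\sigma$ refers to the reported grid, which is exactly where the measurement was filed.
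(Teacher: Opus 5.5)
Your proposal is correct and follows the paper's proof essentially step for step. The inclusion comes from the monotonicity of Eq.~\eqref{eq:clear} in the fields of the envelope, completeness from the fact that each named measurement sits in a block of $\mathcal{E}(\sigma_i)$, and exactness from Eq.~\eqref{eq:map} being a ratio of sums and counts that clearing removes together; your extra remarks on re-sounding filling only later blocks and on footprints reporting the true grids match what the paper states in Section~\ref{subsec:ledger} or leaves implicit, and do not change the argument.
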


\begin{proof}
By Eq.~\eqref{eq:envelope}, every field of $\sigma_{\mathcal{G}}$ contains the corresponding field of $\sigma_i$, and every condition in Eq.~\eqref{eq:clear} is monotone in the fields, which gives the inclusion. A measurement named by $\sigma_i$ was reported by some $k\in\mathcal{K}_i$ from a grid $x\in\mathcal{X}_i$ on a band $f\in\mathcal{F}_i$ at a round $t\in\mathcal{T}_i$, so its block lies in $\mathcal{E}(\sigma_i)$ and is cleared. Eq.~\eqref{eq:map} depends on the ledger only through the sums and counts of its entries, and clearing removes both.\end{proof}

According to \textbf{Proposition}~\ref{prop:complete}, a merged execution cannot remove less than the demands it replaces, so a record issued for the campaign of $\sigma_{\mathcal{G}}$ holds for each of its members.

\begin{proposition}\label{prop:interval}
Suppose that all demands share their terminals, grids, and bands, that their intervals start and end at block boundaries, and that no interval contains another. Then $(\mathrm{P})$ has an optimal solution whose groups are runs of demands that are consecutive in start round, and Eq.~\eqref{eq:dp} attains its optimum $F(Q)$ with $O(Q^{2})$ evaluations of $C$ and $\Delta$.
\end{proposition}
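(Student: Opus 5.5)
Under the hypotheses, every demand is $\sigma_i=(\mathcal{K}_0,\mathcal{T}_i,\mathcal{X}_0,\mathcal{F}_0)$ with $\mathcal{T}_i=[t_i^{\textrm{s}},t_i^{\textrm{e}}]$ aligned to block boundaries. No interval contains another, so after sorting by start round the end rounds are sorted too: $t_1^{\textrm{s}}<\dots<t_Q^{\textrm{s}}$ and $t_1^{\textrm{e}}<\dots<t_Q^{\textrm{e}}$. The plan is a standard exchange argument on intervals.

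First I would record two monotonicity facts that follow from Eqs.~\eqref{eq:clear}, \eqref{eq:cost} and~\eqref{eq:envelope}. Since the non-interval fields are common, the envelope of a group $\mathcal{G}$ with $a=\min\mathcal{G}$ and $b=\max\mathcal{G}$ (in sorted order) is $(\mathcal{K}_0,[t_a^{\textrm{s}},t_b^{\textrm{e}}],\mathcal{X}_0,\mathcal{F}_0)$. This depends only on $a$ and $b$, and it equals the envelope of the run $\{a,\dots,b\}$. By the definition of the candidates in Section~\ref{subsec:merge}, the run is exactly $\mathcal{G}_{ab}$: every demand $c$ with $a\le c\le b$ satisfies $t_a^{\textrm{s}}\le t_c^{\textrm{s}}$ and $t_c^{\textrm{e}}\le t_b^{\textrm{e}}$, so it lies inside. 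Moreover, $\mathcal{E}(\sigma)$ is monotone in the interval, hence so are $\mathcal{Y}(\sigma)$, $C(\sigma)$ and $\Delta(\sigma)$.

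Next I would handle feasibility under $(\mathrm{C}_1)$. Replacing $\mathcal{G}$ by the run $\mathcal{G}_{ab}$ keeps the left side $\Delta(\sigma_{\mathcal{G}})$ unchanged. The right side can only grow, because $\bigcup_{i\in\mathcal{G}_{ab}}\mathcal{E}(\sigma_i)\supseteq\bigcup_{i\in\mathcal{G}}\mathcal{E}(\sigma_i)$ and $\Delta$ is a nonnegative sum. Hence any feasible group may be replaced by the run between its extreme members at equal cost, and the result still covers at least the same demands. Block alignment matters here. It makes $\mathcal{E}(\sigma_i)$ consist exactly of the blocks of $\mathcal{T}_i$, so a union of member blocks is a union of block intervals, and $\Delta$ of the envelope minus that union is precisely the gaps. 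I expect this to be used only to keep the accounting clean, not for the inequality itself.

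Then I would turn a cover by runs into a partition by consecutive runs without increasing cost. Take an optimal cover by runs $[a_1,b_1],\dots$ and sort them by $a$. If one run is nested in another, drop it. Otherwise the runs have $a_1<a_2<\dots$ and $b_1<b_2<\dots$, and any overlap can be removed by shrinking the later run $[a_2,b_2]$ to $[b_1+1,b_2]$. This is the main obstacle, because shrinking must preserve $(\mathrm{C}_1)$. The left side drops by $\Delta$ of the blocks in $[t_{a_2}^{\textrm{s}},t_{b_1+1}^{\textrm{s}})$. The right side drops by at most the same amount, since those blocks were counted in the union only once. I would argue this by showing the set difference of envelopes is contained in the set difference of member unions, and I would check the boundary block carefully using alignment. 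Cost does not increase by monotonicity of $C$. Iterating yields an optimal solution whose groups are consecutive runs partitioning $1,\dots,Q$.

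Finally, Eq.~\eqref{eq:dp} enumerates exactly such partitions: $F(j)$ is the minimum cost of partitioning $\{1,\dots,j\}$ into feasible consecutive runs, and this follows by induction on $j$ from the last run $[i,j]$. Hence $F(Q)$ attains the optimum. Each pair $(i,j)$ needs one evaluation of $C$ and $\Delta$ for the feasibility test and the cost, which gives $O(Q^2)$ evaluations.
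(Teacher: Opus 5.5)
Your outline follows the paper's proof step for step: replace each group by the run between its extreme members, shrink overlapping runs, and read the partition off Eq.~\eqref{eq:dp}. The gap is in the shrinking step, which you rightly call the crux: the justification you propose points the wrong way.

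Write $E_{\mathrm{old}}\supseteq E_{\mathrm{new}}$ for the envelope blocks of $[a_2,b_2]$ and $[b_1+1,b_2]$, and $U_{\mathrm{old}}\supseteq U_{\mathrm{new}}$ for their member unions. Preserving $(\mathrm{C}_1)$ needs $\Delta(U_{\mathrm{old}}\setminus U_{\mathrm{new}})\le\Delta(E_{\mathrm{old}}\setminus E_{\mathrm{new}})$, that is, the inclusion $U_{\mathrm{old}}\setminus U_{\mathrm{new}}\subseteq E_{\mathrm{old}}\setminus E_{\mathrm{new}}$. You propose the reverse inclusion. That would yield the opposite inequality, and it is false in general. $E_{\mathrm{old}}\setminus E_{\mathrm{new}}$ consists of the blocks of $[t_{a_2}^{\textrm{s}},t_{b_1+1}^{\textrm{s}})$, and any gap between consecutive members among $a_2,\dots,b_1+1$ lies there while belonging to no member. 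So whenever such a gap holds a named report, its entries lie in $E_{\mathrm{old}}\setminus E_{\mathrm{new}}$ but not in $U_{\mathrm{old}}\setminus U_{\mathrm{new}}$. Your remark that those blocks are ``counted in the union only once'' likewise presumes the prefix is fully covered. In fact its gaps are exactly what make the left side drop by more than the right.

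The correct inclusion comes from no-nesting. If $r\in\mathcal{T}_c$ with $c\le b_1$ and $r\ge t_{b_1+1}^{\textrm{s}}$, then $r\le t_c^{\textrm{e}}\le t_{b_1}^{\textrm{e}}<t_{b_1+1}^{\textrm{e}}$, so $r\in\mathcal{T}_{b_1+1}$. Hence every round that leaves the member union lies in $[t_{a_2}^{\textrm{s}},t_{b_1+1}^{\textrm{s}})$. The paper states the same fact as: over-deletion equals the utility of the gap blocks, and the shrunken run's gaps are a subset of the original's.

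Block alignment is what lifts this round-level fact to ledger entries, because each $\mathcal{E}(\sigma_i)$ is then a disjoint union of per-block slices. Contrary to your expectation, alignment is needed for the inequality itself, not only for clean accounting. Without it, a single entry $(x,f,e)$ can be named both by a round of an early member and by a gap round of the shrunken run. Shrinking then moves that entry from the member union into the over-deletion.

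With this repair, the rest of your argument goes through as written: dropping nested runs, monotonicity of $C$, the induction behind Eq.~\eqref{eq:dp}, and the $O(Q^2)$ count.
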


\begin{proof}
Take an optimal cover and replace each group $\mathcal{G}$ by the run of all demands whose intervals lie in the interval of $\sigma_{\mathcal{G}}$. The envelope and the cost are unchanged, and the right side of $(\mathrm{C}_1)$ can only grow, so the run is feasible. Since no interval contains another, sorting by start round also sorts by end. The blocks of the envelope of a run are then those of its member intervals plus those of the gaps between consecutive members, and the over-deletion is the utility of the gap blocks that hold a named report. If two runs overlap, shrink the later one so that it starts after the earlier ends. The shorter run has a smaller envelope and a subset of the gaps, so its cost and its over-deletion do not grow. The result is a no costlier partition into runs, and Eq.~\eqref{eq:dp} examines each such partition by its last run, giving the optimum and the count.\end{proof}

\begin{theorem}\label{thm:greedy}
Let $\mathrm{OPT}_{\mathcal{A}}$ be the least cost of covering $\mathcal{Q}$ with candidates in $\mathcal{A}$. The greedy of Section~\ref{subsec:merge} returns a cover meeting $(\mathrm{C}_1)$ that costs at most $(1+\ln Q)\,\mathrm{OPT}_{\mathcal{A}}$. Under the conditions of \textbf{Proposition}~\ref{prop:interval}, $\mathrm{OPT}_{\mathcal{A}}$ is the optimum of $(\mathrm{P})$.
\end{theorem}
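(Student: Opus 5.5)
The plan is to treat the merge as weighted set cover over the ground set $\mathcal{Q}$ and the family $\mathcal{A}$, with weight $C(\sigma_{\mathcal{G}})$ for each $\mathcal{G}\in\mathcal{A}$. The proof then invokes Chv\'atal's bound~\cite{Chvatal1979MOR}, which gives ratio $H_Q\le 1+\ln Q$. Three facts are needed first. Feasibility holds because $\mathcal{A}$ contains only candidates that satisfy $(\mathrm{C}_1)$, so every selected group meets the constraint. Termination holds because each $\mathcal{G}_{ii}\in\mathcal{A}$: its envelope is $\sigma_i$, so its over-deletion is zero. Thus $\mathcal{A}$ covers $\mathcal{Q}$, and every greedy step removes at least one uncovered demand. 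Finally, costs are positive, since $C\ge c^{0}>0$, so the price ratios are well defined.

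For the ratio, I would use the standard pricing argument. When the greedy picks $\mathcal{G}$, it charges each newly covered demand $i\in\mathcal{G}\cap\mathcal{U}$ the price $C(\sigma_{\mathcal{G}})/|\mathcal{G}\cap\mathcal{U}|$. The total charge equals the greedy cost. Fix any group $\mathcal{G}^\ast$ of an optimal cover in $\mathcal{A}$, and order its members by the time they are covered. When the $r$-th from last member is covered, at least $r$ members of $\mathcal{G}^\ast$ are still uncovered. Since $\mathcal{G}^\ast$ was available to the greedy at that step, the price paid is at most $C(\sigma_{\mathcal{G}^\ast})/r$. Summing over the members gives a charge of at most $H_{|\mathcal{G}^\ast|}C(\sigma_{\mathcal{G}^\ast})\le H_Q\,C(\sigma_{\mathcal{G}^\ast})$. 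Summing over the groups of the optimal cover, where a demand in several groups is counted at least once, yields the greedy cost $\le H_Q\,\mathrm{OPT}_{\mathcal{A}}\le(1+\ln Q)\,\mathrm{OPT}_{\mathcal{A}}$. Covers rather than partitions cause no difficulty here, because an overlap only overcounts on the optimal side.

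The second claim needs $\mathrm{OPT}_{\mathcal{A}}$ to equal the optimum of $(\mathrm{P})$ under the conditions of \textbf{Proposition}~\ref{prop:interval}. One direction is immediate: every cover from $\mathcal{A}$ is feasible for $(\mathrm{P})$, so $\mathrm{OPT}_{\mathcal{A}}\ge\mathrm{OPT}$. For the other direction, \textbf{Proposition}~\ref{prop:interval} gives an optimal solution whose groups are runs of demands consecutive in start round, each satisfying $(\mathrm{C}_1)$. It then suffices to show that each such run equals some $\mathcal{G}_{ij}$.

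This identification is the main step. Take a run from $i$ to $j$. Since the terminals, grids and bands coincide, the envelope of $\sigma_i$ and $\sigma_j$ is $(\mathcal{K}_i,[t_i^{\textrm{s}},t_j^{\textrm{e}}],\mathcal{X}_i,\mathcal{F}_i)$, and this is also the envelope of the whole run. No interval contains another, so start order equals end order. Hence the demands whose SDUs lie inside this envelope are exactly those with $i\le \ell\le j$, and $\mathcal{G}_{ij}$ is precisely the run. Its envelope and right-hand side of $(\mathrm{C}_1)$ match those of the run, so it is feasible and was not dropped from $\mathcal{A}$. The optimal run cover is therefore a cover from $\mathcal{A}$, which gives $\mathrm{OPT}_{\mathcal{A}}\le\mathrm{OPT}$.

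The delicate point is ties in start rounds. Equal starts with distinct intervals would force one interval to contain the other, which the conditions exclude; identical duplicates change no envelope and can be absorbed into the same run.
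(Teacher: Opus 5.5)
Your proposal is correct and follows the paper's proof. Feasibility comes from the construction of $\mathcal{A}$, the $(1+\ln Q)$ ratio is Chv\'atal's bound for greedy weighted set cover, and the equality $\mathrm{OPT}_{\mathcal{A}}=\mathrm{OPT}$ follows from identifying each optimal run of \textbf{Proposition}~\ref{prop:interval} with $\mathcal{G}_{ij}$. You also write out steps the paper leaves implicit: the pricing argument, the easy direction $\mathrm{OPT}_{\mathcal{A}}\ge\mathrm{OPT}$, and the handling of ties in start rounds.
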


\begin{proof}
Every candidate in $\mathcal{A}$ satisfies $(\mathrm{C}_1)$, and the greedy stops only when every demand is covered. With the candidates as sets and $C(\sigma_{\mathcal{G}})$ as their weights, the bound is the analysis of greedy weighted set cover~\cite{Chvatal1979MOR}. Under the conditions of \textbf{Proposition}~\ref{prop:interval}, the run from demand $i$ to demand $j$ is exactly $\mathcal{G}_{ij}$, so $\mathcal{A}$ contains the optimal runs.\end{proof}

According to \textbf{Theorem}~\ref{thm:greedy}, the greedy loses at most a logarithmic factor to the best cover over its candidates, and Section~\ref{sec:exp} compares it with exhaustive search.

\begin{proposition}\label{prop:flush}
Let demands arrive as a Poisson process with rate $\alpha$, let their liability rates $\psi_i$ be independent and identically distributed, and let each flush cost $c^{0}$ plus a term independent of when it happens. Among rules that flush based on the queue, the one minimizing the long-run average of flush cost plus liability flushes once $\Psi\ge\theta$ for some threshold $\theta$.
\end{proposition}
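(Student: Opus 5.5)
The plan is to model the queue as a controlled Markov process and argue through average-cost dynamic programming. Between flushes the state is the total liability rate $\Psi$: it starts at $0$ after a flush and jumps by an independent draw $\psi$ at each Poisson arrival of rate $\alpha$. Liability accrues at rate $\Psi$ per round. Two assumptions drive the reduction. Arrivals are memoryless, and a flush costs $c^{0}$ plus a term independent of its timing. Hence any information about the queue beyond $\Psi$ affects neither future costs nor future transitions, and we restrict to rules that decide from $\Psi$ alone. The first step is this sufficiency claim. The extra flush term is charged once per demand no matter when that demand is flushed, so its long-run average is a constant $\alpha\,\mathbb{E}[\text{term}]$ that no rule can change, and we drop it.

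The second step uses renewal--reward. Every flush resets the state to $\Psi=0$, so a stationary rule produces i.i.d.\ cycles, and the long-run average equals $\bigl(c^{0}+\mathbb{E}[\int_{\text{cycle}}\Psi\,dt]\bigr)/\mathbb{E}[\text{cycle length}]$. By the standard Dinkelbach reduction, minimizing this ratio is equivalent to finding $g^\ast$ such that the optimal stopping problem
\[
V(\Psi)=\inf_\tau \mathbb{E}_\Psi\Bigl[\int_0^\tau(\Psi_s-g^\ast)\,ds+c^{0}\Bigr]
\]
has value $0$ at $\Psi=0$. Since a flush restarts a cycle, the decision to flush at state $\Psi$ is the decision to stop in this problem.

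The third step, which I expect to be the main obstacle, is to show that the stopping region is an upper set $\{\Psi\ge\theta\}$. I would establish that $V$ is nondecreasing in $\Psi$ by coupling: two queues that start at $\Psi<\Psi'$ and receive identical future arrivals keep the gap $\Psi'-\Psi$ at all times. Under any stopping time, the larger queue therefore accrues at least as much running cost, while both pay the same terminal cost $c^{0}$. The continuation value $\mathbb{E}_\Psi[\cdots]$ is thus nondecreasing in $\Psi$, and it grows at rate at least $1$ per unit of continued time, whereas the stopping cost $c^{0}$ is constant.

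To turn this into a threshold, I would argue as follows. If stopping is optimal at $\Psi$, then at $\Psi'>\Psi$ any continuation strategy can be compared with the same strategy started from $\Psi$ and costs at least as much plus the extra accrued $\Psi'-\Psi$. Hence continuing is no better at $\Psi'$, and stopping is optimal there too.

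Finally, $\theta:=\inf\{\Psi:\text{stop optimal}\}$ defines the rule. It is finite because for $\Psi>g^\ast$ the running cost is positive at all future times, which makes immediate stopping optimal, and this completes the argument. Jumps make the exact value of $\theta$ implicit, but the claim only asserts that the optimal rule has threshold form.
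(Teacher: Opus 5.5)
Your argument is correct, but it reaches the threshold by a different route than the paper.

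The paper first notes that, since $\Psi$ is constant between arrivals, flushing need only be decided at arrivals. It then writes the average-cost optimality equation for the chain embedded at arrival epochs: flushing costs $c^{0}$ and resets to $\Psi=0$, while waiting costs $\Psi/\alpha$ and moves the state to $\Psi+\psi_i$. It asserts that value iteration preserves monotonicity in $\Psi$, so the waiting branch crosses the constant flushing branch once.

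You instead use the regenerative structure. Because every flush returns the state to $0$, renewal--reward and the Dinkelbach parametrization reduce the average-cost problem to a single-cycle stopping problem with running cost $\Psi_s-g^\ast$. Your pathwise coupling then gives the upper-set property from the identity $\mathbb{E}_{\Psi'}\bigl[\int_0^\tau(\Psi'_s-g^\ast)\,ds\bigr]=\mathbb{E}_{\Psi}\bigl[\int_0^\tau(\Psi_s-g^\ast)\,ds\bigr]+(\Psi'-\Psi)\,\mathbb{E}[\tau]$. Note that the extra cost therefore accrues at rate $\Psi'-\Psi$, not ``at least $1$'' as you wrote.

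Your route buys three things:
\begin{itemize}
\item You avoid regularity the paper leaves unstated: existence of a solution to the optimality equation, and convergence of value iteration on an unbounded continuous state space.
\item You handle stopping in continuous time directly. The stopping set depends only on $\Psi$, which changes only at arrivals, so no separate reduction to arrival epochs is needed.
\item You get the explicit bound $\theta\le g^\ast\le\alpha c^{0}$ for free, the last inequality coming from the rule that flushes at every arrival.
\end{itemize}
The paper's route is shorter and follows the standard monotone-MDP template. That template would survive modifications that break the clean reset to $\Psi=0$ on which your renewal argument depends.

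One caveat concerns the extra flush term. You drop it by reading it as a per-demand charge whose long-run average $\alpha\,\mathbb{E}[\cdot]$ no rule can change. The paper's remark that the flush cost ``does not depend on the queue'' suggests it may instead be a fixed amount per flush. In that case it cannot be dropped, since its average scales with the flush rate. It does, however, fold into $c^{0}$, and the rest of your argument is unchanged.
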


\begin{proof}
Between arrivals, $\Psi$ is constant and waiting only adds liability, so flushing is decided at arrivals, with state $\Psi$. In the average-cost optimality equation, flushing costs $c^{0}$ and restarts from $\Psi=0$, independent of $\Psi$. Waiting costs $\Psi/\alpha$ until the next arrival and moves the state to $\Psi+\psi_i$. Value iteration preserves monotonicity in $\Psi$, so the cost of waiting grows with $\Psi$ and crosses the constant flushing cost once, at $\theta$.\end{proof}

According to \textbf{Proposition}~\ref{prop:flush}, the flush rule needs only $\Psi$. Merging makes a flush cheaper per demand and favors a larger $\theta$, which Section~\ref{sec:exp} tunes on a validation trace.

\begin{figure*}[!t]
\centering
\subfloat[Over-deletion bound.]{\includegraphics[width=0.24\textwidth]{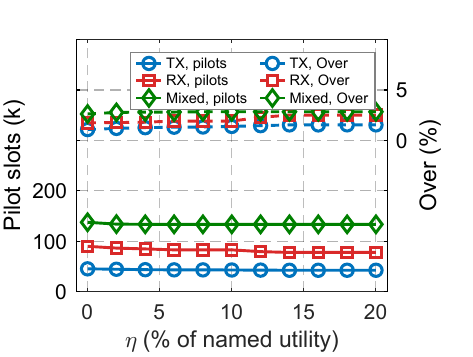}\label{fig:eta}}
\hfill
\subfloat[Arrival rate.]{\includegraphics[width=0.24\textwidth]{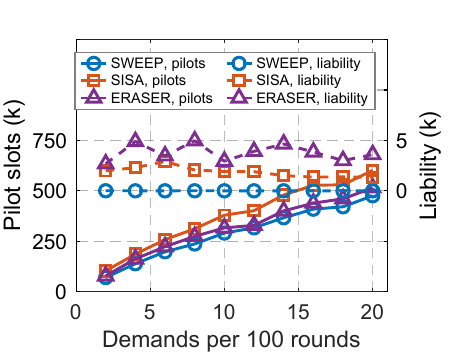}\label{fig:rate}}
\hfill
\subfloat[Block length.]{\includegraphics[width=0.24\textwidth]{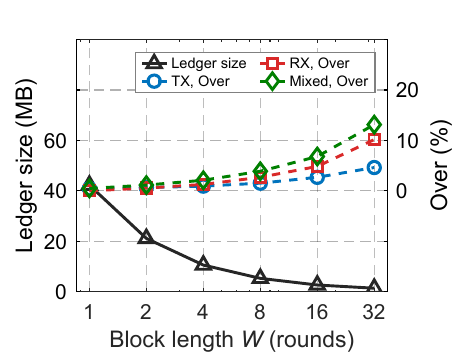}\label{fig:block}}
\hfill
\subfloat[Fault overlap.]{\includegraphics[width=0.24\textwidth]{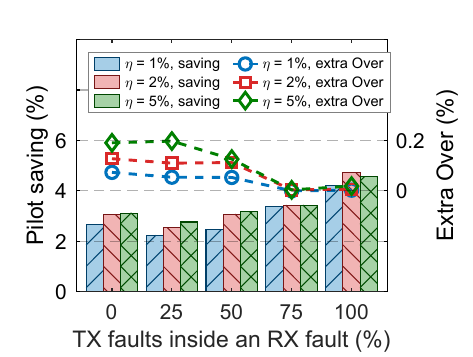}\label{fig:overlap}}
\caption{Results of \method{}. (a) Pilot slots and over-deletion as $\eta$ varies, where $\eta=0$ merges only nested demands. (b) Liability and pilot slots as the arrival rate of demands grows. (c) Ledger size and over-deletion as the block length $W$ varies. (d) Pilot saving and added over-deletion of merging against one-by-one execution as the overlap between transmitter and receiver faults grows.}
\label{fig:results}
\end{figure*}

\section{Performance Evaluation}
\label{sec:exp}

\subsection{Experimental Setup}
\label{subsec:setup}

We simulate a BS at the edge of a $100\times100$ m urban area of $20\times20$ grids, with channels ray-traced on a street layout at 2.4 and 5.8 GHz. The BS has $M=16$ antennas and $B=32$ beams, and the map offers $L=4$ candidates. $K=50$ terminals move by random waypoints at walking speed and report every round for 400 rounds, with $W=5$ and $n^{\textrm{min}}=3$. The traffic $\lambda_{x,f}$ follows the visits to each grid. A power fault adds 10 dB, a positioning fault files reports four to six grids away, and a receive-chain fault adds $180^{\circ}$ to eight of the 16 chains at 5.8 GHz. Faults last 30 to 60 rounds, and their demands arrive as a Poisson process 5 to 20 rounds later. By default, $c^{0}$ equals 50 full sweeps and $\eta$ is 2\% of the named utility. Thresholds are tuned on a separate trace, and Figs.~\ref{fig:eta} and~\ref{fig:overlap} flush at $\theta=50$. On the testbed in Fig.~\ref{fig:testbed}, a USRP-2955 with four coherent receive ports serves as the BS and forms $B=8$ beams. Three USRP-2900 terminals are carried by people along fixed routes over 12 grids of 1 m in a 55 m$^{2}$ lab at 5.8 GHz. The faults are injected by a 6 dB gain, a wrong grid, and a phase offset on one receive channel.

We measure the map by the traffic-weighted SNR loss of Eq.~\eqref{eq:loss} and the hit rate, the share of selections whose best beam is among the $L$ candidates, both averaged after round 50. $\mathrm{GainErr}$ is the gap in dB between the map gain and the measured gain of the selected beam, and $\mathrm{Invalid}$ is the traffic share of pairs that are not valid. $\mathrm{Left}$ is the share of the named measurements filed so far that remain, and $\mathrm{Over}$ is the share of good measurements cleared. $\mathrm{Wait}$ counts the rounds from the arrival of a demand until the pairs carrying 90\% of its traffic hold $n^{\textrm{min}}$ new measurements. The cost counts the campaigns, the pilot slots of campaigns and of full sweeps on invalid pairs, and the liability, the sum of $\psi_i$ over the waiting rounds. We compare \method{} with five baselines. SISA~\cite{Bourtoule2021SP} splits the grids into four shards and, whenever the queue holds a fixed number of demands, runs one campaign per shard on the envelope of the demands with blocks in it. ERASER~\cite{Hu2024CCS} defers demands until the recent SNR loss exceeds its earlier level by a threshold, then executes the envelope of the queue. FedRecover~\cite{Cao2023SP} removes every measurement of the named terminals, and Crab~\cite{Jiang2025TIFS} rolls the map back to before the fault and re-sounds every pair, both on arrival. ST-REM~\cite{Hu2020TON} screens reports on arrival against trusted measurements and never deletes. Retrain rebuilds the map from the raw reports logged by the simulator and testbed, except the named ones.

\subsection{Experimental Results}
\label{subsec:results}

\begin{figure}[t]
\centering
\subfloat[Queue length.]{\includegraphics[width=0.485\linewidth]{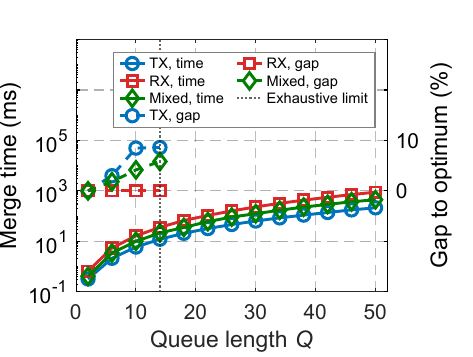}\label{fig:scale}}
\hfill
\subfloat[Number of grids.]{\includegraphics[width=0.485\linewidth]{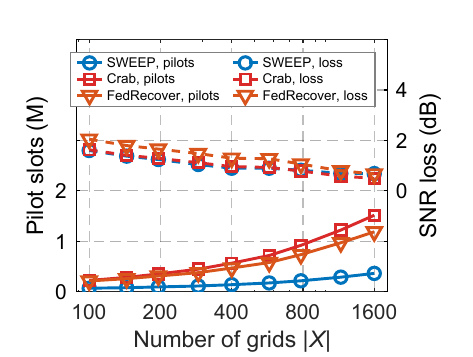}\label{fig:grid}}
\caption{Scalability of \method{}. (a) Merge time and gap to exhaustive search as $Q$ grows. (b) Pilot slots and SNR loss as the number of grids grows.}
\label{fig:scalability}
\end{figure}

In Table~\ref{tab:main}, \method{} clears the fewest good measurements in every fault group, at most 2.6\% in simulation and 5.1\% on the testbed, while FedRecover and Crab clear up to 86.8\%. Under mixed faults, \method{} also spends the fewest pilot slots, with liability 3 against 4889 for ERASER. Its SNR loss is the closest to Retrain under receiver and mixed faults. SISA spends fewer pilot slots under receiver faults but clears 13.0\%, and ST-REM deletes nothing after screening on arrival.

In Fig.~\ref{fig:results}, raising $\eta$ to 20\% saves up to 13.6\% of the pilot slots for less than 0.8 points of over-deletion. \method{} spends 20.6\% to 32.1\% fewer pilot slots than SISA at every arrival rate. Each doubling of $W$ halves the ledger and raises the over-deletion, and merging saves up to 4.7\% of the pilot slots. In Fig.~\ref{fig:scalability}, the greedy stays within 8.6\% of exhaustive search, and Crab spends up to 4.3 times the pilot slots of \method{}.

\section{Conclusion}
\label{sec:conclusion}

We studied how a base station forgets faulty measurements in a crowdsourced channel knowledge map. SDUs remove transmitter and receiver faults without discarding whole terminals or rolling back the map, and \method{} merges them into re-sounding campaigns whose over-deletion stays within a bound.
In simulation and on a USRP testbed, \method{} clears at most 5.1\% of the good measurements, against up to 86.8\% for removing whole terminals or rolling back the map.

\bibliographystyle{IEEEtran}
\bibliography{ref}

@article{Romandini2025TNNLS,
  author={Romandini, Nicolò and Mora, Alessio and Mazzocca, Carlo and Montanari, Rebecca and Bellavista, Paolo},
  journal={IEEE Transactions on Neural Networks and Learning Systems},
  title={{Federated Unlearning: A Survey on Methods, Design Guidelines, and Evaluation Metrics}},
  year={2025},
  volume={36},
  number={7},
  pages={11697-11717},
  doi={10.1109/TNNLS.2024.3478334}
}

@article{Wu2026TNSE,
  author={Wu, Beining and Ding, Zihao and Huang, Jun},
  journal={IEEE Transactions on Network Science and Engineering},
  title={{A Review of Continual Learning in Edge AI}},
  year={2026},
  volume={13},
  pages={6571--6588},
  doi={10.1109/TNSE.2026.3657652}
}

@inproceedings{Bourtoule2021SP,
  author={Bourtoule, Lucas and Chandrasekaran, Varun and Choquette-Choo, Christopher A. and Jia, Hengrui and Travers, Adelin and Zhang, Baiwu and Lie, David and Papernot, Nicolas},
  booktitle={Proceedings of the IEEE Symposium on Security and Privacy (S\&P)},
  title={{Machine Unlearning}},
  year={2021},
  pages={141--159}
}

@article{Zeng2024COMST,
  author={Zeng, Yong and Chen, Junting and Xu, Jie and Wu, Di and Xu, Xiaoli and Jin, Shi and Gao, Xiqi and Gesbert, David and Cui, Shuguang and Zhang, Rui},
  journal={IEEE Communications Surveys \& Tutorials},
  title={{A Tutorial on Environment-Aware Communications via Channel Knowledge Map for 6G}},
  year={2024},
  volume={26},
  number={3},
  pages={1478--1519},
  doi={10.1109/COMST.2024.3364508}
}

@inproceedings{Wu2021ICCW,
  author={Wu, Di and Zeng, Yong and Jin, Shi and Zhang, Rui},
  booktitle={Proceedings of the IEEE International Conference on Communications Workshops (ICC Workshops)},
  title={{Environment-Aware and Training-Free Beam Alignment for mmWave Massive MIMO via Channel Knowledge Map}},
  year={2021},
  pages={1--7},
  doi={10.1109/ICCWorkshops50388.2021.9473871}
}

@inproceedings{Ye2017LCN,
  author={Ye, Yanzhen and Wang, Bang},
  booktitle={Proceedings of the IEEE 42nd Conference on Local Computer Networks (LCN)},
  title={{On Building an Indoor Radio Map from Crowdsourced Samples with Annotation Errors}},
  year={2017},
  pages={401--408},
  doi={10.1109/LCN.2017.26}
}

@inproceedings{Gong2025ICCCW,
  author={Gong, Yuanxiang and Wang, Xingchen and Wang, Ting and Zhang, Chiya},
  booktitle={Proceedings of the IEEE/CIC International Conference on Communications in China Workshops (ICCC Workshops)},
  title={{Channel Knowledge Map Updating with Machine Unlearning}},
  year={2025},
  pages={1--6},
  doi={10.1109/ICCCWorkshops67136.2025.11148163}
}

@inproceedings{Hu2024CCS,
  author={Hu, Yuke and Lou, Jian and Liu, Jiaqi and Ni, Wangze and Lin, Feng and Qin, Zhan and Ren, Kui},
  booktitle={Proceedings of the ACM SIGSAC Conference on Computer and Communications Security (CCS)},
  title={{ERASER: Machine Unlearning in MLaaS via an Inference Serving-Aware Approach}},
  year={2024},
  pages={3883--3897},
  doi={10.1145/3658644.3670398}
}

@inproceedings{Cao2023SP,
  author={Cao, Xiaoyu and Jia, Jinyuan and Zhang, Zaixi and Gong, Neil Zhenqiang},
  booktitle={Proceedings of the IEEE Symposium on Security and Privacy (S\&P)},
  title={{FedRecover: Recovering from Poisoning Attacks in Federated Learning using Historical Information}},
  year={2023},
  pages={1366--1383},
  doi={10.1109/SP46215.2023.10179336}
}

@article{Jiang2025TIFS,
  author={Jiang, Yu and Shen, Jiyuan and Liu, Ziyao and Tan, Chee Wei and Lam, Kwok-Yan},
  journal={IEEE Transactions on Information Forensics and Security},
  title={{Toward Efficient and Certified Recovery From Poisoning Attacks in Federated Learning}},
  year={2025},
  volume={20},
  pages={2632--2647},
  doi={10.1109/TIFS.2025.3533907}
}

@article{Hu2020TON,
  author={Hu, Yidan and Zhang, Rui},
  journal={IEEE/ACM Transactions on Networking},
  title={{A Spatiotemporal Approach for Secure Crowdsourced Radio Environment Map Construction}},
  year={2020},
  volume={28},
  number={4},
  pages={1790--1803},
  doi={10.1109/TNET.2020.2992939}
}

@article{Chvatal1979MOR,
  author={Chvatal, V.},
  journal={Mathematics of Operations Research},
  title={{A Greedy Heuristic for the Set-Covering Problem}},
  year={1979},
  volume={4},
  number={3},
  pages={233--235},
  doi={10.1287/moor.4.3.233}
}

@article{Tao2024VLDB,
  author={Tao, Youming and Wang, Cheng-Long and Pan, Miao and Yu, Dongxiao and Cheng, Xiuzhen and Wang, Di},
  journal={Proceedings of the VLDB Endowment},
  title={{Communication Efficient and Provable Federated Unlearning}},
  year={2024},
  volume={17},
  number={5},
  pages={1119--1131},
  doi={10.14778/3641204.3641220}
}

@inproceedings{Hu2019INFOCOM,
  author={Hu, Yidan and Zhang, Rui},
  booktitle={Proceedings of the IEEE Conference on Computer Communications (INFOCOM)},
  title={{Differentially-Private Incentive Mechanism for Crowdsourced Radio Environment Map Construction}},
  year={2019},
  pages={1594--1602},
  doi={10.1109/INFOCOM.2019.8737512}
}

@inproceedings{Han2025ICCCW,
  author={Han, Rubing and Zhang, Chiya and Wang, Ting},
  booktitle={Proceedings of the IEEE/CIC International Conference on Communications in China Workshops (ICCC Workshops)},
  title={{Analysis of Channel Knowledge Map Updating from Information Perspective}},
  year={2025},
  pages={1--6},
  doi={10.1109/ICCCWorkshops67136.2025.11148189}
}

@inproceedings{Liu2021IWQoS,
  author={Liu, Gaoyang and Ma, Xiaoqiang and Yang, Yang and Wang, Chen and Liu, Jiangchuan},
  booktitle={Proceedings of the IEEE/ACM 29th International Symposium on Quality of Service (IWQoS)},
  title={{FedEraser: Enabling Efficient Client-Level Data Removal from Federated Learning Models}},
  year={2021},
  pages={1--10},
  doi={10.1109/IWQOS52092.2021.9521274}
}

\end{document}